\newenvironment{customlem}[1]
  {\innercustomlem}
  {\endinnercustomlem}
\let\Pr\relax
\DeclareMathOperator*{\Pr}{\mathbb{P}}
\newcommand{\inprod}[1]{\left\langle #1 \right\rangle}
\newcommand{\eps}{\varepsilon}
\renewcommand{\log}{\lg}
\newcommand{\R}{\mathbb{R}}
\newcommand{\PointSets}{\mathcal{P}}
\newtheorem{theorem}{Theorem}
\newtheorem{lemma}{Lemma}
\newtheorem{conjecture}{Conjecture}
\newtheorem{corollary}{Corollary}
\title{Optimality of the Johnson-Lindenstrauss lemma}
\author{Kasper Green Larsen\thanks{Aarhus University. \texttt{larsen@cs.au.dk}. Supported by Center for Massive Data Algorithmics, a Center of the Danish National Research Foundation, grant DNRF84, a Villum Young Investigator Grant and an AUFF Starting Grant.}  \and Jelani Nelson\thanks{Harvard University. \texttt{minilek@seas.harvard.edu}. Supported by NSF CAREER award CCF-1350670, NSF grant IIS-1447471, ONR Young Investigator award N00014-15-1-2388, and a Google Faculty Research Award.}}
\begin{document}

\maketitle

\thispagestyle{empty}

\begin{abstract}
For any $d, n \geq 2$ and $1/({\min\{n,d\}})^{0.4999} <\eps<1$, we show the
existence of a set of $n$ vectors $X\subset \R^d$ such that any
embedding $f:X\rightarrow \R^m$ satisfying
$$
\forall x,y\in X,\ (1-\eps)\|x-y\|_2^2\le \|f(x)-f(y)\|_2^2 \le (1+\eps)\|x-y\|_2^2
$$
must have 
$$
m  = \Omega(\eps^{-2} \lg n).
$$
This lower bound matches the upper bound given by the Johnson-Lindenstrauss lemma \cite{JL84}. Furthermore, our lower bound holds for nearly the full range of $\eps$ of interest, since there is always an isometric embedding into dimension $\min\{d, n\}$ (either the identity map, or projection onto $\mathop{span}(X)$).

Previously such a lower bound was only known to hold against {\em linear} maps $f$, and not for such a wide range of parameters $\eps, n, d$ \cite{LarsenN16}. The best previously known lower bound for general $f$ was $m = \Omega(\eps^{-2}\lg n/\lg(1/\eps))$ \cite{Welch74,Levenshtein83,Alon03}, which is suboptimal for any $\eps = o(1)$.
\end{abstract}

\section{Introduction}

In modern algorithm design, often data is high-dimensional, and one seeks to first pre-process the data via some {\em dimensionality reduction} scheme that preserves geometry in such a way that is acceptable for particular applications. The lower-dimensional embedded data has the benefit of requiring less storage, less communication bandwith to be transmitted over a network, and less time to be analyzed by later algorithms. Such schemes have been applied to good effect in a diverse range of areas, such as streaming algorithms \cite{Muthukrishnan05}, numerical linear algebra \cite{Woodruff14}, compressed sensing \cite{CandesRT04,Donoho04}, graph sparsification \cite{SpielmanS11}, clustering \cite{BoutsidisZMD15,CohenEMMP15}, nearest neighbor search \cite{Har-PeledIM12}, and many others.

A cornerstone dimensionality reduction result is the following {\em Johnson-Lindenstrauss (JL) lemma} \cite{JL84}.
\begin{theorem}[JL lemma]
Let $X\subset \R^d$ be any set of size $n$, and let $\eps\in(0,1/2)$ be arbitrary. Then there exists a map $f:X\rightarrow \R^m$ for some $m = O(\eps^{-2}\log n)$ such that
\begin{equation}
\forall x,y\in X,\ (1-\eps)\|x-y\|_2^2\le \|f(x)-f(y)\|_2^2 \le (1+\eps)\|x-y\|_2^2 . \label{eqn:jl-guarantee}
\end{equation}
\end{theorem}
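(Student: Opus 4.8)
The plan is to prove the JL lemma by the probabilistic method: exhibit a \emph{single} random \emph{linear} map that, with positive probability, works simultaneously for all pairs. Concretely, I would let $A\in\R^{m\times d}$ be a random matrix whose entries are i.i.d.\ $\Normal(0,1/m)$ (standard Gaussians scaled by $1/\sqrt m$), fix $m = C\eps^{-2}\log n$ for a sufficiently large absolute constant $C$, and set $f(x) = Ax$ for all $x\in X$. Since $f$ is linear we have $f(x)-f(y) = A(x-y)$, so it suffices to show that with probability strictly greater than $0$, the inequality $\bigl|\,\|A(x-y)\|_2^2 - \|x-y\|_2^2\,\bigr|\le\eps\|x-y\|_2^2$ holds for every one of the at most $\binom n2$ pairs $x,y\in X$ at once.

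The core step is the \emph{distributional Johnson--Lindenstrauss lemma}: for any fixed unit vector $v\in\R^d$,
$$\Pr_A\!\left[\left|\,\|Av\|_2^2 - 1\,\right| > \eps\right] \le 2e^{-c\eps^2 m}$$
for an absolute constant $c>0$. To prove this I would exploit rotational invariance of the Gaussian distribution: for fixed $v$ with $\|v\|_2=1$, the coordinates $\sqrt m\,(Av)_i$ are i.i.d.\ standard Gaussians, so $m\|Av\|_2^2$ is a chi-squared random variable with $m$ degrees of freedom and mean $m$. Concentration then follows from a Chernoff-type bound on $\chi^2_m$ (equivalently, a moment-generating-function/Bernstein argument for the sum of the i.i.d.\ sub-exponential variables $Z_i^2-1$): both tails $\Pr[\chi^2_m\ge(1+\eps)m]$ and $\Pr[\chi^2_m\le(1-\eps)m]$ are at most $e^{-c\eps^2 m}$ for $\eps\in(0,1)$. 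Applying this with $v=(x-y)/\|x-y\|_2$ for each pair, and choosing $C$ large enough that $2e^{-c\eps^2 m} < 1/\binom n2$ --- which holds for $m = C\eps^{-2}\log n$ with $C$ depending only on $c$ --- a union bound over all pairs shows that the bad event has probability less than $1$, so a good $A$, hence a good $f$, exists.

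I expect the main technical obstacle to be the chi-squared concentration estimate, i.e.\ establishing $\Pr[|\chi^2_m - m| > \eps m] \le 2e^{-c\eps^2 m}$ with the \emph{correct} dependence on $\eps$ (quadratic in the exponent, not merely linear), since a crude Markov bound on a single exponential moment only yields $e^{-\Theta(\eps)m}$ once $\eps$ is bounded away from $0$. The fix is to optimize the Chernoff parameter: writing $\Pr[\chi^2_m\ge(1+\eps)m]\le\min_{0<t<1/2} e^{-tm(1+\eps)}(1-2t)^{-m/2}$ and taking $t=\eps/(2(1+\eps))$ gives $e^{-(m/2)(\eps-\ln(1+\eps))}\le e^{-m(\eps^2-\eps^3)/6}$ via $\ln(1+\eps)\le\eps-\eps^2/2+\eps^3/3$; the lower tail is handled analogously. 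Everything else --- rotational invariance, the union bound, and the passage from the distributional statement to the stated pairwise guarantee --- is routine. (Remark: this argument in fact produces an \emph{oblivious} linear $f$ that succeeds with high probability, and it goes through verbatim with i.i.d.\ Rademacher entries in place of Gaussians, at the cost of slightly more involved sub-Gaussian/sub-exponential bookkeeping in the tail bound.)
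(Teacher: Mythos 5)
Your proposal is correct, and it is exactly the route the paper itself points to for this (cited, not re-proved) theorem: instantiate a distributional JL distribution with failure probability below $1/\binom{n}{2}$ --- here via i.i.d.\ Gaussian entries and the optimized chi-squared Chernoff bound giving the $e^{-\Omega(\eps^2 m)}$ tail --- and union bound over all pairs; the only cosmetic difference is that the original argument of Johnson--Lindenstrauss uses projection onto a uniformly random $m$-dimensional subspace rather than an i.i.d.\ Gaussian matrix. No gaps; the $t=\eps/(2(1+\eps))$ optimization indeed yields the quadratic-in-$\eps$ exponent needed for $m=O(\eps^{-2}\log n)$.
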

Even though the JL lemma has found applications in a plethora of different fields over the past three decades, its optimality has still not been settled. In the original paper by Johnson and Lindenstrauss~\cite{JL84}, it was proven that for any $\eps<1/2$, there exists $n$ point sets $X \subset \R^n$ for which any embedding $f : X \to \R^m$ providing \eqref{eqn:jl-guarantee} must have $m = \Omega(\lg n)$. This was later improved in \cite{Levenshtein83,Alon03}, which showed the existence of an $n$ point set $X \subset \R^n$, such that any $f$ providing \eqref{eqn:jl-guarantee} must have $m = \Omega(\min\{n, \eps^{-2} \lg n/\lg(1/\eps)\})$, which falls short of the JL lemma for any $\eps = o(1)$. This lower bound can also be obtained from the Welch bound \cite{Welch74}, which states $\eps^{2k} \ge (1/(n-1))(n/\binom{m+k-1}{k} - 1)$ for any positive integer $k$, by choosing $2k = \lceil \lg n / \lg(1/\eps)\rceil$. The lower bound can also be extended to hold for any $n \le e^{c\eps^2 d}$ for some constant $c>0$.

\paragraph{Our Contribution}
In this paper, we finally settle the optimality of the JL lemma. Furthermore, we do so for almost the full range of $\eps$.
\begin{theorem}
\label{thm:general}
For any integers $n,d \geq 2$ and $\eps \in (\lg^{0.5001}n/\sqrt{\min\{n,d\}}, 1)$, there exists a set of points $X\subset \R^d$ of size $n$, such that any map $f:X\rightarrow \R^m$ providing the guarantee \eqref{eqn:jl-guarantee} must have
\begin{equation}
m=\Omega(\eps^{-2}\lg (\eps^2 n)). \label{eqn:lb}
\end{equation}
\end{theorem}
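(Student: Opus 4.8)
The plan is an \emph{encoding argument}. We will describe a large family $\PointSets$ of $n$-point subsets of $\R^d$ together with a decoder $\mathcal{D}$ such that, for every $X\in\PointSets$ and every $f\colon X\to\R^m$ satisfying \eqref{eqn:jl-guarantee}, $\mathcal{D}$ reconstructs $X$ from an $O(m^2+nm)$-bit description of $f$. Since $\mathcal{D}$ is then injective on $\PointSets$, this forces $m^2+nm=\Omega(\log|\PointSets|)$, and we will arrange $\log|\PointSets|=\Omega(n\eps^{-2}\log(\eps^2 n))$. Because there is always an isometric embedding into dimension $\min\{n,d\}$ we may assume $m\le\min\{n,d\}\le n$; the hypothesis $\eps>\log^{0.5001}n/\sqrt{\min\{n,d\}}$ also gives $\eps^{-2}\log(\eps^2 n)\le n$, so $m^2+nm=\Omega(n\eps^{-2}\log(\eps^2 n))$ yields \eqref{eqn:lb}. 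The one elementary tool is the passage from distances to inner products: all our point sets contain $0$, so after translating we may assume $f(0)=0$, and then polarization together with \eqref{eqn:jl-guarantee} gives $\inprod{f(p),f(q)}=\inprod{p,q}\pm O(\eps)$ for all $p,q\in X$ with $\|p\|,\|q\|=O(1)$.

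The construction. Put $N=\lfloor n/2\rfloor$, $r=n-1-N$, and $k_0=\lfloor c_0\eps^{-2}\rfloor$ for a small absolute constant $c_0>0$. Fix unit ``anchor'' vectors $y_1,\dots,y_N\in\R^d$ with $|\inprod{y_i,y_j}|\le\eps$ for $i\ne j$: take an orthonormal set when $d\ge N$, and otherwise a random set of $N$ unit vectors, which has this property with positive probability precisely because $N\le e^{\Omega(\eps^2 d)}$ --- and it is exactly this last inequality that the lower bound on $\eps$ buys us. For $\sigma=(\sigma_1,\dots,\sigma_r)$ with each $\sigma_j\in\binom{[N]}{k_0}$, set
\[
X_\sigma=\{0\}\cup\{y_i:i\in[N]\}\cup\Bigl\{\tfrac{1}{\sqrt{k_0}}\textstyle\sum_{i\in\sigma_j}y_i:j\in[r]\Bigr\},
\]
a set of $n$ vectors all of norm $O(1)$, and let $\PointSets=\{X_\sigma\}_\sigma$, so that $\log|\PointSets|=r\log\binom{N}{k_0}=\Theta(n\eps^{-2}\log(\eps^2 n))$.

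The decoder, in outline. Given a valid $f$ (with $f(0)=0$) for $X_\sigma$, write $w_j=f\bigl(\tfrac{1}{\sqrt{k_0}}\sum_{i\in\sigma_j}y_i\bigr)$. The inner-product estimate gives $\inprod{w_j,f(y_i)}=\tfrac{1}{\sqrt{k_0}}\,\mathbf{1}[i\in\sigma_j]\pm O(\eps)$; choosing $c_0$ small makes $1/\sqrt{k_0}$ a large constant multiple of $\eps$, so thresholding the $N$ numbers $\inprod{w_j,f(y_i)}$ recovers $\sigma_j$, hence $\sigma$ and $X_\sigma$. The remaining task is to record \emph{just enough of $f$} to perform this thresholding within $O(m^2+nm)$ bits: one records, in a fixed basis obtained from a maximal linearly independent subset of $\{f(y_i)\}$ (of size $m'\le m$, spanning the ambient space of all images), the coordinates of (i) that basis and the other anchor images and (ii) each codeword image $w_j$, and one argues these coordinates need only be stored to an accuracy achievable with $O(1)$ bits per coordinate, giving $O(m^2)$ bits for the anchors and $O(m)$ bits per codeword.

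The main obstacle. Two intertwined issues carry the weight. First, $f$ need not be linear, so $w_j$ is \emph{not} in general $\tfrac{1}{\sqrt{k_0}}\sum_{i\in\sigma_j}f(y_i)$; every step must therefore use only the quantities $\inprod{w_j,f(y_i)}$, $\|w_j\|$, $\|f(y_i)\|$ that \eqref{eqn:jl-guarantee} actually controls. Second, and this is the crux, one must avoid the $\log(1/\eps)$ loss: a naive encoding that pins down each coordinate of $f$ to additive error $\ll\eps$ costs $\Theta(\log(1/\eps))$ bits per coordinate and only reproduces the known bound $\Omega(\eps^{-2}\log n/\log(1/\eps))$ of \cite{Welch74,Levenshtein83,Alon03}. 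Reaching $O(1)$ amortized bits per coordinate requires exploiting quantitatively both the near-orthogonality of the anchor images (unit vectors with pairwise inner products $O(\eps)$ in $\R^m$) and the constant-factor gap between the ``signal'' $1/\sqrt{k_0}$ and the noise $O(\eps)$ in the thresholding --- and in particular coping with the fact that the anchor images, though spanning, may be badly conditioned, so that a direct coarse description in their basis is delicate. Finally one verifies the endpoint regimes ($k_0\le N$, $N\le e^{\Omega(\eps^2 d)}$, $m\le n$), which is precisely where the stated range of $\eps$ enters.
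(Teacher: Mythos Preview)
Your overall strategy---construct many point sets distinguished by $k$-subsets, use the polarization identity to convert distance preservation into inner-product preservation, then encode a low-distortion image efficiently enough to force a counting contradiction---is exactly the paper's approach. But the proposal has a genuine gap at precisely the step you yourself flag as ``the crux'': you assert that anchor and codeword images can be recorded with $O(1)$ amortized bits per coordinate, and you do not say how. With your parameter choice $N=\lfloor n/2\rfloor$ this is a real obstacle, not just a detail. To threshold $\inprod{w_j,f(y_i)}$ you need the recorded versions of these vectors to reproduce each such inner product to additive $O(\eps)$; a naive $\ell_2$ rounding of each anchor image to accuracy $\eps$ costs $\Theta(m\lg(1/\eps))$ bits, and with $N\approx n/2$ anchors that is $\Theta(nm\lg(1/\eps))$ total, which only reproves the $\Omega(\eps^{-2}\lg n/\lg(1/\eps))$ bound. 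Your proposed fix---work in a basis drawn from the anchor images and exploit their near-orthogonality---runs into the conditioning problem you mention: a maximal independent set of $m'\le m$ near-unit vectors with pairwise inner products $O(\eps)$ has Gram matrix $I+E$ with $\|E\|_{\mathrm{op}}$ possibly as large as $\Theta(m\eps)$, and in the target regime $m=\Theta(\eps^{-2}\lg n)$ this is $\Theta(\eps^{-1}\lg n)\gg 1$, so coarse coordinates in that basis do not control inner products to additive $O(\eps)$.

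The paper resolves this with two changes you are missing. First, it uses only $d'=n/\lg(1/\eps)$ anchors (the standard basis vectors $e_1,\dots,e_{d'}$) rather than $n/2$; then one \emph{can} afford the naive $\ell_2$ covering at $O(m\lg(1/\eps))$ bits per anchor, for a total of $O(nm)$ bits. Second---and this is the real idea---for each codeword image $w=f(y_S)$ it does not round $w$ itself, but rather the vector $Aw\in\R^{d'}$ whose $j$th coordinate is $\inprod{\hat f(e_j),w}$, where $A$ is the matrix of rounded anchor images. This vector lies in the $m$-dimensional column space $W$ of $A$ \emph{and} has all coordinates $O(\eps)$, i.e.\ it lies in $(22\eps)T$ where $T=B_\infty^{d'}\cap W$. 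A covering of $(22\eps)T$ by copies of $\eps T$ has size $2^{O(m)}$ by the standard volume bound, so each codeword costs $O(m)$ bits, and the rounded point directly gives each $\inprod{\hat f(e_j),w}$ to additive $\eps$---exactly what the thresholding needs. This $\ell_\infty$-in-a-subspace covering is what replaces your unresolved conditioning argument.
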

Here it is worth mentioning that the JL lemma can be used to give an upper bound of
$$
m = O(\min\{n , d, \eps^{-2} \lg n\}),
$$
where the $d$ term is obvious (the identity map) and the $n$ term follows by projecting onto the $\leq n$-dimensional subspace spanned by $X$. Thus a requirement of at least $\eps = \Omega(1/\sqrt{\min\{n,d\}})$ is certainly necessary for the lower bound \eqref{eqn:lb} to be true, which our constraint on $\eps$ matches up to the $\lg^{0.5001} n$ factor. 

We also make the following conjecture concerning the behavior of the optimal form of Euclidean dimension reduction possible as $\eps\rightarrow 1/\sqrt{\min\{n,d\}}$. Note the $\log(\eps^2 n)$ term as opposed to $\log n$ in the upper bound.
\begin{conjecture}\label{conj:opt}
If $f(n,d,\eps)$ denotes the smallest $m$ such that all $n$-point subsets of $\ell_2^d$ can be embedded into $\ell_2^m$ with distortion at most $1+\eps$, then for all $n,d>1$ and $0<\eps<1$, $f(n,d,\eps) = \Theta(\min\{n, d, \eps^{-2}\log(2 + \eps^2 n)\})$.
\end{conjecture}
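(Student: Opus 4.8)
The plan is to prove the contrapositive by an \emph{encoding argument}: exhibit a random point set $X\subset\R^d$ and show that, were there an $f:X\to\R^m$ satisfying \eqref{eqn:jl-guarantee} with $m$ below the claimed bound, one could reconstruct a high-entropy random object underlying $X$ from too few bits, contradicting Shannon's source coding bound. Hence a typical $X$ admits no such $f$, which proves the theorem.

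\paragraph{Reductions and the instance.} The hypothesis on $\eps$ guarantees $\eps^{-2}\lg(\eps^2 n)<\min\{n,d\}$ with room to spare, so it suffices to build the instance in dimension $d_0=\Theta(\eps^{-2}\lg(\eps^2 n))\le d$ and pad with zeros; we may also assume $m<d_0$, else there is nothing to prove. I would dispose of $\eps=\Omega(1)$ separately: there the target is only $\Omega(\lg n)$, and $n$ independent uniformly random unit vectors suffice, since their pairwise distances are all $\sqrt 2\,(1\pm o(1))$, so \eqref{eqn:jl-guarantee} forces the images to be $\Omega(1)$-separated in a radius-$O(1)$ ball and a volume bound gives $n\le 2^{O(m)}$. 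For $\eps=o(1)$, take $X=\{0\}\cup\{e_1,\dots,e_{d_0}\}\cup\{y_1,\dots,y_T\}$ with $T=n-d_0-1=\Theta(n)$, where each $y_j=k^{-1/2}\sum_{i\in S_j}\sigma_{j,i}e_i$ is chosen independently with $S_j$ uniform in $\binom{[d_0]}{k}$, $\sigma_{j,\cdot}$ uniform in $\{\pm1\}^k$, and $k=\Theta(\eps^{-2})$ with a small constant. The random payload is $Z=(S_j,\sigma_j)_{j\le T}$, with $H(Z)=T\big(\lg\binom{d_0}{k}+k\big)=\Theta\!\big(n\,\eps^{-2}\lg(\eps^2 n)\big)$.

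\paragraph{Rigidity of any valid embedding.} Suppose $f$ satisfies \eqref{eqn:jl-guarantee}; translate so $f(0)=0$ and write $u_i=f(e_i)$. Expanding the three distance constraints among $\{0,e_i,e_j\}$ gives $\|u_i\|^2=1\pm\eps$ and $|\langle u_i,u_j\rangle|=O(\eps)$, so $\{u_i\}$ is a near-orthonormal, strictly overcomplete frame in $\R^m$. Expanding the constraints among $\{0,e_i,y_j\}$ gives $\langle f(y_j),u_i\rangle=(y_j)_i\pm O(\eps)$ for every $i$; since the nonzero coordinates of $y_j$ have magnitude $k^{-1/2}\gg\eps$, thresholding the numbers $\langle f(y_j),u_i\rangle$ recovers $S_j$ and $\sigma_{j,\cdot}$, hence $y_j$ itself, from $f$ restricted to $\{0,e_1,\dots,e_{d_0},y_j\}$. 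Thus a decoder holding sufficiently accurate approximations to $f(0)$, $f(e_1),\dots,f(e_{d_0})$, and $f(y_j)$ can reconstruct $y_j$, and therefore all of $Z$.

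\paragraph{The encoding and the main obstacle.} The encoder transmits suitably rounded images of $0$ and the $e_i$, and then a description of each $f(y_j)$; by the rigidity the decoder recovers $Z$, so $H(Z)\le(\text{overhead for the frame})+\sum_j(\text{bits for }f(y_j))$. Since $H(Z)=\Theta(n\eps^{-2}\lg(\eps^2 n))$ and $T=\Theta(n)$, the bound $m=\Omega(\eps^{-2}\lg(\eps^2 n))$ amounts to describing each $f(y_j)$ using only $O(m)$ bits (and keeping the frame overhead lower order). This is the heart of the theorem. The crude choice — round the $m$ coordinates of $f(y_j)$ to a uniform grid fine enough (accuracy $\Theta(\eps)$) for the thresholding step — costs $\Theta(m\lg(1/\eps))$ bits per point and yields only $m=\Omega(\eps^{-2}\lg(\eps^2 n)/\lg(1/\eps))$, precisely the previously known bound; and this loss is genuine as long as one uses nothing more than that the images $\{f(y_j)\}$ are an $\eps$-separated set in a ball, since such sets of size $(1/\eps)^{\Omega(m)}$ exist. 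Removing the $\lg(1/\eps)$ factor requires exploiting that the images are constrained by far more than pairwise separation: conditioned on the already-transmitted frame, $f(y_j)$ must lie in the intersection of an $O(\eps)$-shell about the sphere with the $d_0$ slabs $\{|\langle w,u_i\rangle-(y_j)_i|\le O(\eps)\}$ — a body that, because $\{u_i\}$ is an overcomplete well-conditioned frame, is substantially thinner than an $\eps$-cube in every direction — so that the total information content of the rounded images is only $O(Tm)$ rather than $O(Tm\lg(1/\eps))$. Making this rigorous should come down to a quantitative spectral estimate for near-orthonormal overcomplete frames (controlling the least singular values of their $m\times m$ submatrices, hence the sizes of these slab-bodies and the number of them that are actually realizable), combined with careful bookkeeping so the estimate holds simultaneously for all $T$ points and uniformly across the admissible range of $(\eps,n,d)$. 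I expect this geometric/linear-algebraic step to be where essentially all the difficulty lies; the construction and the rigidity lemma above are routine by comparison.
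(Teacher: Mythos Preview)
The statement you are asked to prove is a \emph{conjecture}; the paper does not prove it, and explicitly remarks that both the upper bound $f(n,d,\eps)=O(\eps^{-2}\log(2+\eps^2 n))$ (as opposed to the JL bound $O(\eps^{-2}\log n)$) and the lower bound for $\eps$ near $1/\sqrt{\min\{n,d\}}$ remain open. Your proposal treats only a lower bound, and it repeatedly invokes ``the hypothesis on $\eps$''---that is the hypothesis of Theorem~\ref{thm:general}, not of Conjecture~\ref{conj:opt}, which has none beyond $0<\eps<1$. So at best you are sketching Theorem~\ref{thm:general}; you are not addressing the conjecture.

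Even read as a sketch for Theorem~\ref{thm:general}, two concrete steps fail. First, your choice $d_0=\Theta(\eps^{-2}\lg(\eps^2 n))$ is too small: with $k=\Theta(\eps^{-2})$ one has $d_0/k=\Theta(\lg(\eps^2 n))$, hence $\lg\binom{d_0}{k}\approx k\lg(d_0/k)=\Theta(\eps^{-2}\lg\lg(\eps^2 n))$, so $H(Z)=\Theta\!\big(n\,\eps^{-2}\lg\lg(\eps^2 n)\big)$, not $\Theta\!\big(n\,\eps^{-2}\lg(\eps^2 n)\big)$ as you assert; the encoding inequality then yields nothing. The paper takes $d_0=n/\lg(1/\eps)$, which both packs the right entropy (since $\lg(d_0/k)=\Theta(\lg(\eps^2 n))$) and makes the frame overhead $d_0\cdot O(m\lg(1/\eps))=O(nm)$ automatically. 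Second, your plan to get $O(m)$ bits per $f(y_j)$ via spectral control of submatrices of the overcomplete frame and volume bounds on slab-intersections is not the paper's route and is harder than necessary. The paper's device is this: having transmitted the approximations $\hat f(e_i)$, form the $d_0\times m$ matrix $A$ with those as rows and encode $Af(y_j)\in\R^{d_0}$ rather than $f(y_j)\in\R^m$. Since $(Af(y_j))_i=\langle\hat f(e_i),f(y_j)\rangle\in\langle e_i,y_j\rangle\pm O(\eps)$, the vector lies in $(O(\eps))\cdot(B_\infty^{d_0}\cap W)$ where $W$ is the $m$-dimensional column space of $A$; by Corollary~\ref{cor:coveringC} this body has an $\eps$-cover of size $2^{O(m)}$, independent of $\eps$. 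The decoder, knowing $A$, recovers the inner products to within $O(\eps)$ and hence $S_j$. No spectral analysis of the frame is involved.
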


It is worth mentioning that the arguments in previous work \cite{Welch74,Alon03,LarsenN16} all produced hard point sets $P$ which were nearly orthogonal so that any embedding into an {\em incoherent} collection provided low distortion under the Euclidean metric. Recall $P$ is $\eps$-incoherent if every $x\in P$ has unit $\ell_2$ norm, and $\forall x\neq y \in P$ one has $|\langle x,y\rangle| = O(\eps)$. Unfortunately though, it is known that for any $\eps < 2^{-\omega(\sqrt{\lg n})}$, an incoherent collection of $n$ vectors in dimension $m = o(\eps^{-2}\lg n)$  exists, beating the guarantee of the JL lemma. The construction is based on Reed-Solomon codes (see for example \cite{AlonGHP92,NNW14}). Thus proving Theorem~\ref{thm:general} requires a very different construction of a hard point set when compared with previous work.

\subsection{Prior Work}
Prior to our work, a result of the authors \cite{LarsenN16} showed an $m = \Omega(\eps^{-2}\lg n)$ bound in the restricted setting where $f$ must be {\em linear}. This left open the possibility that the JL lemma could be improved upon by making use of nonlinear embeddings. Indeed, as mentioned above even the hard instance of \cite{LarsenN16} enjoys the existence of a nonlinear embedding into $m = o(\eps^{-2}\lg n)$ dimension for $\eps < 2^{-\omega(\sqrt{\lg n})}$. Furthermore, that result only provided hard instances with $n \le \mathop{poly}(d)$, and furthermore $n$ had to be sufficiently large (at least $\Omega(d^{1+\gamma}/\eps^2)$ for any constant $\gamma>0$).

Also related is the so-called {\em distributional JL} (DJL) lemma. The original proof of the JL lemma in \cite{JL84} is via {\em random projection}, i.e.\ ones picks a uniformly random rotation $U$ then defines $f(x)$ to be the projection of $Ux$ onto its first $m$ coordinates, scaled by $1/\sqrt{m}$ in order to have the correct squared Euclidean norm in expectation. Note that this construction of $f$ is both {\em linear}, and {\em oblivious} to the data set $X$. Indeed, all known proofs of the JL lemma proceed by instantiating distributions $\mathcal{D}_{\eps,\delta}$ satisfying the guarantee of the below distributional JL (DJL) lemma.

\begin{lemma}[Distributional JL (DJL) lemma]
For any integer $d\ge 1$ and any $0<\eps,\delta<1/2$, there exists a distribution $\mathcal{D}_{\eps,\delta}$ over $m\times d$ real matrices for some $m\lesssim \eps^{-2}\lg(1/\delta)$ such that
\begin{equation}
\forall u\in\R^d,\ \Pr_{\Pi\sim\mathcal{D}_{\eps,\delta}}(| \|\Pi u\|_2 - \|u\|_2 | > \eps\|u\|_2 ) < \delta . \label{eqn:djl}
\end{equation}
\end{lemma}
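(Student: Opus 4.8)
The plan is to exhibit an explicit distribution built from a dense random matrix and then reduce the statement to a one-dimensional concentration inequality. Concretely, I would take $\Pi = \tfrac{1}{\sqrt m}A$, where $A\in\R^{m\times d}$ has independent entries, each drawn either from $\Normal(0,1)$ or uniformly from $\{-1,+1\}$, and where $m = \lceil C\eps^{-2}\lg(1/\delta)\rceil$ for a large enough absolute constant $C>0$; one checks $m\lesssim \eps^{-2}\lg(1/\delta)$ since $\eps,\delta<1/2$. Because the event in \eqref{eqn:djl} is invariant under rescaling $u$, it suffices to prove the bound for an arbitrary fixed unit vector $u$, which I fix henceforth.

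Writing $A_i$ for the $i$-th row of $A$ and $Z_i = \inprod{A_i,u}$, we have $\|\Pi u\|_2^2 = \tfrac1m\sum_{i=1}^m Z_i^2$. In the Gaussian case each $Z_i$ is exactly $\Normal(0,1)$ (as $\|u\|_2=1$) and the $Z_i$ are independent, so $m\|\Pi u\|_2^2$ has a $\chi^2$ distribution with $m$ degrees of freedom; in the sign case $Z_i$ is a Rademacher-weighted sum with variance $1$ and a uniformly bounded subgaussian norm. Either way $\E\|\Pi u\|_2^2 = 1$, and $\|\Pi u\|_2^2$ is an average of $m$ i.i.d.\ subexponential variables of mean $1$.

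The heart of the argument is the concentration step: bounding $\Pr\bigl(|\,\|\Pi u\|_2^2 - 1\,| > \eps\bigr)$. I would carry this out by the exponential-moment (Chernoff/Bernstein) method: estimate $\E e^{\lambda(Z_i^2-1)}$ for small $|\lambda|$ --- for Gaussians this is the exact expression $(1-2\lambda)^{-1/2}e^{-\lambda}$, yielding the Laurent--Massart-type bounds, while for signs one expands the low moments and uses the subgaussian tail of $Z_i$ (or invokes the Hanson--Wright inequality) --- and then optimize over $\lambda$, using $\eps<1/2$. This gives a two-sided estimate of the form $\Pr\bigl(|\,\|\Pi u\|_2^2 - 1\,| > \eps\bigr) \le 2e^{-c\eps^2 m}$ for an absolute constant $c>0$; substituting $m = \lceil C\eps^{-2}\lg(1/\delta)\rceil$ with $C$ large enough makes the right-hand side smaller than $\delta$. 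I expect this tail bound --- and in particular obtaining the clean $e^{-\Omega(\eps^2 m)}$ rate uniformly for all $\eps\in(0,1/2)$, not merely for small $\eps$ --- to be the main technical point; the Gaussian case is immediate from $\chi^2$ tail bounds, whereas the sign case needs a bit more care with the quadratic form.

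Finally I would pass from the squared-norm statement to the stated norm statement. Since $0<\eps<1$ we have $(1-\eps)^2\le 1-\eps$ and $(1+\eps)^2\ge 1+\eps$, so $\|\Pi u\|_2\notin(1-\eps,1+\eps)$ forces $\|\Pi u\|_2^2\notin(1-\eps,1+\eps)$; hence the event in \eqref{eqn:djl} is contained in the event bounded above, which completes the proof.
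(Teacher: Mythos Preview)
The paper does not actually prove this lemma: it is stated there as a well-known background result (the distributional form underlying the original \cite{JL84} argument), with no proof given in the text. So there is no ``paper's own proof'' to compare against.

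That said, your proposal is exactly the standard argument and is correct as outlined. Taking $\Pi=\tfrac{1}{\sqrt m}A$ with i.i.d.\ $\Normal(0,1)$ (or Rademacher) entries, reducing by homogeneity to a fixed unit vector, and then applying subexponential/Bernstein-type concentration to $\|\Pi u\|_2^2=\tfrac1m\sum_i Z_i^2$ to obtain $\Pr(|\,\|\Pi u\|_2^2-1\,|>\eps)\le 2e^{-c\eps^2 m}$ is precisely the classical route; the final containment $\{|\,\|\Pi u\|_2-1\,|>\eps\}\subseteq\{|\,\|\Pi u\|_2^2-1\,|>\eps\}$ via $(1-\eps)^2\le 1-\eps$ and $(1+\eps)^2\ge 1+\eps$ is also fine. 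The only cosmetic point is that you never need the Rademacher case for the statement as written---the Gaussian construction already suffices and makes the concentration step completely elementary via explicit $\chi^2_m$ mgf bounds---so if you want the cleanest write-up you can drop the sign variant entirely.
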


One then proves the JL lemma by proving the DJL lemma with $\delta < 1/\binom{n}{2}$, then performing a union bound over all $u\in \{x - y : x,y \in X\}$ to argue that $\Pi$ simultaneously preserves all norms of such difference vectors simultaneously with positive probability. It is known that the DJL lemma is tight \cite{JayramW13,KaneMN11}; namely any distribution $\mathcal{D}_{\eps,\delta}$ over $\R^{m\times n}$ satisfying \eqref{eqn:djl} must have $m = \Omega(\min\{d, \eps^{-2}\lg(1/\delta)\})$. Note though that, prior to our current work, it may have been possible to improve upon the JL lemma by avoiding the DJL lemma. Our main result implies that, unfortunately, this is not the case: obtaining \eqref{eqn:jl-guarantee} via the DJL lemma combined with a union bound is optimal.

\subsection{Subsequent Work}
After the initial dissemination of this work, Alon and Klartag asked the question of the optimal space complexity for solving the static ``approximate dot product'' problem on the sphere in $d$ dimensions \cite{AlonK17}. In this problem one is given a set $P$ of $n$ points $x_1,\ldots,x_n$ in $S^{d-1}$ to preprocess into a data structure, as well as an error parameter $\eps$. Then in response to \texttt{query}$(i,j)$, one must output $\langle x_i, x_j\rangle$ with additive error at most $\eps$. The work \cite{KushilevitzOR00} provides a solution using space $O(\eps^{-2}n\log n)$ bits, which turns out to be optimal iff $d = \Omega(\eps^{-2}\log n)$, shown by \cite{AlonK17}. In fact \cite{AlonK17} was able to provide an understanding of the precise asymptotic space complexity $s(n,d,\eps)$ of this problem for all ranges of $n,d,\eps$. This understanding as a consequence provides an alternate proof of the optimality of the JL lemma, since their work implies $s(n,n,2\eps) \gg s(n,c\eps^{-2}\log n,\eps)$ for $c>0$ a small constant (and if dimension-reduction into dimension $d'$ were always possible, one would have $s(n,n,2\eps) \le s(n,d',\eps)$ by first dimension-reducing the input!).

In terms of proof methods, unlike \cite{Alon03,Welch74}, our work uses an encoding argument. We proceed in a somewhat ad hoc fashion, showing that one can use simple upper bounds on the sizes of $\eps$-nets of various convex bodies to conclude that dimension reduction far below the JL upper bound would imply an encoding scheme that is too efficient to exist for some task, based on rounding vectors to net points (see Section~\ref{sec:proofoverview} for an overview). Interestingly enough, the original $m = \Omega(\log n)$ lower bound of \cite{JL84} was via a volumetric argument, which is related to the packing and covering bounds one needs to execute our encoding argument! The work of \cite{AlonK17} on understanding $s(n,d,\eps)$ is also via an encoding argument. They observe that the question of understanding $s(n,d,\eps)$ is essentially equivalent to understanding the logarithm of the optimal size of an $\eps$-net under entrywise $\ell_\infty$ norm of $n\times n$ Gram matrices of rank $d$, since $P$ can be encoded as the name of the closest point in the net to its Gram matrix. They then proceed to provide tight upper and lower bounds on the optimal net size for the full range of parameters.

The work \cite{AlonK17} also made progress toward Conjecture~\ref{conj:opt}. In particular, they proved the lower bound for all ranges of parameters, thus removing the ``$\log^{0.5001} n$'' term in our requirement on $\eps$ in Theorem~\ref{thm:general}. As for the upper bound, they made progress on a {\em bipartite} version of the conjecture. In particular, they showed that for any $2n$ vectors $x_1,\ldots,x_n,y_1,\ldots,y_n\in S^{d-1}$, one can find $2n$ vectors $a_1,\ldots,a_n,b_1,\ldots,b_n\in S^{m-1}$ for $m = O(\eps^{-2}\log(2 + \eps^2 n))$ so that for all $i,j\in[n]$, $|\langle x_i,y_j\rangle - \langle a_i, b_j\rangle| < \eps$. No promise is given for dot product preservation amongst the $x_i$'s internally, or amongst the $y_j$'s internally. Also note that dot product preservation up to additive $\eps$ error does not always imply norm preservation with relative error $1+\eps$, i.e.\ when distances are small. 

\section{Preliminaries on Covering Convex Bodies}
We here state a standard result on covering numbers. The proof is via a volume comparison argument; see for example \cite[Equation (5.7)]{Pisier89}.

\begin{lemma}
\label{lem:coveringL2}
Let $E$ be an $m$-dimensional normed space, and let $B_E$ denote its unit ball. For any $0 < \eps < 1$, one can cover $B_E$ using at most $2^{m \lg(1 + 2/\eps)}$ translated copies of $\eps B_E$.
\end{lemma}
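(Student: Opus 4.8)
The plan is to use the standard volumetric packing argument. First I would fix a linear isomorphism identifying $E$ with $\R^m$ as vector spaces, so that $B_E$ becomes a bounded convex body with nonempty interior, and the Lebesgue measure $\mathrm{vol}$ on $\R^m$ is available: it is translation-invariant, satisfies $\mathrm{vol}(\lambda S) = \lambda^m \mathrm{vol}(S)$ for $\lambda>0$, and assigns $B_E$ a finite positive value. Then I would take a \emph{maximal} $\eps$-separated subset $\{x_1,\dots,x_N\}\subseteq B_E$, i.e.\ a set with $\|x_i - x_j\|_E > \eps$ for all $i\neq j$ that cannot be extended; such a set is finite for the reason that makes the whole argument work (see below), so maximality makes sense.

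The key computation is a volume comparison. Since the points are pairwise more than $\eps$ apart, the translates $x_i + (\eps/2)B_E$ are pairwise disjoint (two points lying in a common such translate would be within $\eps$ of each other by the triangle inequality), and since each $x_i\in B_E$ they are all contained in $(1+\eps/2)B_E$. Comparing volumes,
$$ N\cdot (\eps/2)^m\,\mathrm{vol}(B_E) \;\le\; (1+\eps/2)^m\,\mathrm{vol}(B_E), $$
and dividing through by $(\eps/2)^m\,\mathrm{vol}(B_E)>0$ yields $N \le (1 + 2/\eps)^m = 2^{m\log(1+2/\eps)}$ (in particular $N$ is finite).

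Finally, the covering claim follows from maximality: for any $x\in B_E$, the set $\{x_1,\dots,x_N\}\cup\{x\}$ fails to be $\eps$-separated, so there is an $i$ with $\|x-x_i\|_E \le \eps$, i.e.\ $x\in x_i + \eps B_E$. Hence the $N \le 2^{m\log(1+2/\eps)}$ translated copies $x_i+\eps B_E$ of $\eps B_E$ cover $B_E$. I do not expect a genuine obstacle here — this is why the statement is quoted as standard — but the points that deserve a line of care are the basic properties of $\mathrm{vol}$ under the identification $E\cong\R^m$ (translation invariance, the scaling law $\mathrm{vol}(\lambda B_E)=\lambda^m\mathrm{vol}(B_E)$, and $0<\mathrm{vol}(B_E)<\infty$), and the harmless fact that replacing open balls by closed balls, or the strict separation by a non-strict one, does not change the bound.
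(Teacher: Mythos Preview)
Your proof is correct and is exactly the standard volume-comparison argument the paper has in mind; the paper itself does not supply a proof but simply cites \cite[Equation~(5.7)]{Pisier89} for this inequality. Your maximal $\eps$-separated set plus the disjoint-packing-into-$(1+\eps/2)B_E$ volume bound is precisely that argument.
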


\begin{corollary}
\label{cor:coveringC}
Let $T$ be an origin symmetric convex body in $\R^m$. For any $0 < \eps < 1$, one can cover $T$ using at most $2^{m\lg(1 + 2/\eps)}$ translated copies of $\eps T$.
\end{corollary}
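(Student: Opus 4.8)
The plan is to reduce Corollary~\ref{cor:coveringC} directly to Lemma~\ref{lem:coveringL2} by equipping $\R^m$ with the norm whose unit ball is $T$. Since $T$ is an origin-symmetric convex body --- i.e. compact, convex, symmetric about the origin, and with nonempty interior --- its Minkowski gauge
$$
\|x\|_T := \inf\{\lambda > 0 : x \in \lambda T\}
$$
is well-defined and finite for every $x \in \R^m$; finiteness uses that $T$ contains some ball $r B_2$ about the origin, so $\|x\|_T \le \|x\|_2/r$.

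First I would verify that $\|\cdot\|_T$ is genuinely a norm. The implication $\|x\|_T = 0 \iff x = 0$ follows from boundedness of $T$ (if $T \subseteq R B_2$ then $\|x\|_T \ge \|x\|_2 / R > 0$ for $x \neq 0$); origin symmetry of $T$ gives $\|{-x}\|_T = \|x\|_T$, which together with the scaling in the definition yields $\|\alpha x\|_T = |\alpha|\,\|x\|_T$ for all scalars $\alpha$; and convexity of $T$ gives the triangle inequality in the standard way. Moreover, since $T$ is closed one has $T = \{x : \|x\|_T \le 1\}$ exactly, so $T$ is precisely the unit ball $B_E$ of the normed space $E := (\R^m, \|\cdot\|_T)$.

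It then remains only to invoke Lemma~\ref{lem:coveringL2} with this choice of $E$: for any $0 < \eps < 1$, the ball $B_E$ can be covered by at most $2^{m\lg(1 + 2/\eps)}$ translated copies of $\eps B_E$. Unwinding the identifications $B_E = T$ and $\eps B_E = \eps T$ gives exactly the claimed covering of $T$ by $2^{m\lg(1+2/\eps)}$ translates of $\eps T$.

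I do not expect any real obstacle here: the only point requiring care is confirming that the gauge of $T$ is a bona fide norm, which is exactly where the three hypotheses --- origin-symmetric, convex, and ``body'' (bounded with nonempty interior) --- are each used. Everything else is an immediate invocation of the previous lemma.
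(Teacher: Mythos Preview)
Your proposal is correct and is essentially identical to the paper's own proof: both equip the ambient space with the Minkowski gauge of $T$, observe that origin symmetry and convexity make this a norm with unit ball $T$, and then invoke Lemma~\ref{lem:coveringL2}. The only cosmetic difference is that the paper phrases it as restricting the Minkowski functional to the subspace spanned by $T$ (thereby covering the case where $T$ is not full-dimensional), whereas you take ``convex body'' to include nonempty interior in $\R^m$; either reading suffices for the application.
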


\begin{proof}
The Minkowski functional of an origin symmetric convex body $T$, when restricted to the subspace spanned by vectors in $T$, is a norm for which $T$ is the unit ball (see e.g.~\cite[Proposition 1.1.8]{Thompson96}). It thus follows from Lemma~\ref{lem:coveringL2} that $T$ can be covered using at most $2^{m\lg(1 + 2/\eps)}$ translated copies of $\eps T$.
\end{proof}

In the remainder of the paper, we often use the notation $B_p^d$ to denote the unit $\ell_p$ ball in $\R^d$.

\section{Lower Bound Proof}
\label{sec:proofoverview}
In the following, we start by describing the overall strategy in our proof. This first gives a fairly simple proof of a sub-optimal lower bound. We then introduce the remaining ideas needed and complete the full proof. The proof goes via a counting argument. More specifically, we construct a large family $\PointSets = \{P_1,P_2,\dots\}$ of very different sets of $n$ points in $\R^d$. We then assume all point sets in $\PointSets$ can be embedded into $\R^m$ while preserving all pairwise distances to within $(1+\eps)$. Letting $f_1(P_1),f_2(P_2),\dots,$ denote the embedded point sets, we then argue that our choice of $\PointSets$ ensures that any two $f_i(P_i)$ and $f_j(P_j)$ must be very different. If $m$ is too low, this is impossible as there are not enough sufficiently different point sets in $\R^m$.

In greater detail, the point sets in $\PointSets$ are chosen as follows: Let $e_1,\dots,e_d$ denote the standard unit vectors in $\R^d$. For now, assume that $d = n/\lg(1/\eps)$ and $\eps \in (\lg^{0.5001}n/\sqrt{d}, 1)$. We will later show how to generalize the proof to the full range of $d$. For any set $S \subset [d]$ of $k=\eps^{-2}/256$ indices, define a vector $y_S := \sum_{j \in S} e_j/\sqrt{k}$. A vector $y_S$ has the property that $\langle y_S, e_j \rangle = 0$ if $j \notin S$ and $\langle y_S, e_j \rangle = 16 \eps$ if $j \in S$. The crucial property here is that there is a gap of $16 \eps$ between the inner products depending on whether or not $j \in S$. Now if $f$ is a mapping to $\R^m$ that satisfies the JL-property~\eqref{eqn:jl-guarantee} for $P=\{0,e_1,\dots,e_d,y_S\}$, then first off, we can assume $f(0)=0$ since pairwise distances are translation invariant. From this it follows that $f$ must preserve norms of the vectors $x \in P$ to within $(1+\eps)$ since 
\begin{align*}
(1-\eps)\|x\|_2^2 &= (1-\eps)\|x-0\|_2^2 \leq \|f(x)-f(0)\|_2^2 \\
{}&= \|f(x)\|_2^2 = \|f(x)-f(0)\|_2^2 \\
{}& \leq (1+\eps)\|x-0\|_2^2\\
{}& = (1+\eps) \|x\|_2^2.
\end{align*}
We then have that $f$ must preserve inner products $\langle e_j, y_S \rangle$ up to an additive of $4 \eps$. This can be seen by the following calculations, where $v \pm X$ denotes the interval $[v-X, v+X]$:
\begin{eqnarray*}
\|f(e_j)-f(y_S)\|_2^2 &=& \|f(e_j)\|_2^2 + \|f(y_S)\|_2^2\\
&&{}- 2\langle f(e_j), f(y_S)\rangle \Rightarrow \\
2\langle f(e_j), f(y_S)\rangle &\in& (1\pm \eps)\|e_j\|_2^2 + (1\pm \eps)\|y_S\|_2^2\\
&&{}- (1\pm \eps)\|e_j-y_S\|_2^2 \Rightarrow \\
2\langle f(e_j), f(y_S)\rangle &\in& 2\langle e_j, y_S\rangle \pm \eps(\|e_j\|_2^2 + \|y_S\|_2^2\\
&&{} + \|e_j-y_S\|_2^2) \Rightarrow \\
\langle f(e_j), f(y_S)\rangle &\in& \langle e_j, y_S\rangle \pm 4\eps.
\end{eqnarray*}
This means that after applying $f$, there remains a gap of $(16-8)\eps = 8\eps$ between $\langle f(e_j),f(y_S)\rangle$ depending on whether or not $j \in S$. With this observation, we are ready to describe the point sets in $\PointSets$ (in fact they will not be point sets, but rather ordered sequences of points, possibly with repetition). Let $Q = n-d-1$. For every choice of $Q$ sets $S_1,\dots,S_Q \subset [d]$ of $k$ indices each, we add a point set $P$ to $\PointSets$. The sequence $P$ is simply $(0,e_1,\dots,e_d,y_{S_1},\dots,y_{S_Q})$. This gives us a family $\PointSets$ of size $\binom{d}{k}^Q$. If we look at JL embeddings for all of these point sets $f_1(P_1),f_2(P_2),\dots$, then intuitively these embeddings have to be quite different. This is true since $f_i(P_i)$ uniquely determines $P_i$ simply by computing all inner products between the $f_i(e_j)$'s and $f_i(y_{S_\ell})$'s. The problem we now face is that there are infinitely many sets of $n$ points in $\R^m$ that one can embed to. We thus need to discretize $\R^m$ in a careful manner and argue that there are not enough $n$-sized sets of points in this discretization to uniquely embed each $P_i$ when $m$ is too low.

\paragraph{Encoding Argument} To give a formal proof that there are not enough ways to embed the point sets in $\PointSets$ into $\R^m$ when $m$ is low, we give an encoding argument. More specifically, we assume that it is possible to embed every point set in $\PointSets$ into $\R^m$ while preserving pairwise distances to within $(1+\eps)$. We then present an algorithm that based on this assumption can take any point set $P \in \PointSets$ and encode it into a bit string of length $O(nm)$. The encoding guarantees that $P$ can be uniquely recovered from the encoding. The encoding algorithm thus effectively defines an injective mapping $g$ from $\PointSets$ to $\{0,1\}^{O(nm)}$. Since $g$ is injective, we must have $|\PointSets| \leq 2^{O(nm)}$. But $|\PointSets| = \binom{d}{k}^Q = (\eps^2 n/\lg(1/\eps))^{\Omega(\eps^{-2}n)}$ and we can conclude $m = \Omega(\eps^{-2}\lg(\eps^2 n/\lg(1/\eps)))$. For $\eps>1/n^{0.4999}$, this is $m = \Omega(\eps^{-2}\lg n)$.

\paragraph{First Attempt}
The difficult part is to design an encoding algorithm that yields an encoding of size $O(nm)$ bits. A natural first attempt would go as follows: recall that any JL-embedding $f$ for a point set $P \in \PointSets$ (where $f$ may depend on $P$) must preserve gaps in $\langle f(e_j),f(y_{S_\ell})\rangle$'s depending on whether or not $j \in S_\ell$. This follows simply by preserving distances to within a factor $(1+\eps)$ as argued above. If we can give an encoding that allows us to recover approximations $\hat{f}(e_j)$ of $f(e_j)$ and $\hat{f}(y_{S_\ell})$ of $f(y_{S_\ell})$ such that $\|\hat{f}(e_j) - f(e_j)\|_2^2 \leq \eps$ and $\|\hat{f}(y_{S_\ell}) - f(y_{S_\ell})\|_2^2 \leq \eps$, then by the triangle inequality, the distance $\|\hat{f}(e_j) -\hat{f}(y_{S_\ell})\|_2^2$ is also a $(1+O(\eps))$ approximation to $\|e_j - y_{S_\ell}\|_2^2$ and the gap between inner products would be preserved. To encode sufficiently good approximations $\hat{f}(e_j)$ and $\hat{f}(y_{S_\ell})$, one could do as follows: since norms are roughly preserved by $f$, we must have $\|f(e_j)\|_2^2, \|f(y_{S_\ell})\|_2^2 \leq 1+\eps$. Letting $B_2^m$ denote the $\ell_2$ unit ball in $\R^m$, we could choose some fixed covering $C_2$ of $(1+\eps)B_2^m$ with translated copies of $\eps B_2^m$. Since $f(e_j), f(y_{S_\ell}) \in (1+\eps)B_2^m$, we can find translations $c_2(f(e_j)) + \eps B_2^m$ and $c_2(f(y_{S_\ell})) + \eps B_2^m$ of $\eps B_2^m$ in $C_2$, such that these balls contain $f(e_j)$ and $f(y_{S_\ell})$ respectively. Letting $\hat{f}(e_j) = c_2(f(e_j)) $ and $\hat{f}(y_{S_\ell})=c_2(f(y_{S_\ell}))$ be the centers of these balls, we can encode an approximation of $f(e_j)$ and $f(y_{S_\ell})$ using $\lg |C_2|$ bits by specifying indices into $C_2$. Unfortunately, covering $(1+\eps)B_2^m$ by $\eps B_2^m$ needs $|C_2| = 2^{\Omega(m\lg(1/\eps))}$ since the volume ratio between $(1+\eps)B_2^m$ and $\eps B_2^m$ is $(1/\eps)^{\Omega(m)}$. The $\lg(1/\eps)$ factor loss leaves us with a lower bound on $m$ of no more than $m = \Omega(\eps^{-2}\lg(\eps^2 n/\lg(1/\eps))/\lg(1/\eps))$, roughly recovering the lower bound of Alon \cite{Alon03} by a different argument.

\paragraph{Full Proof}
The key idea to reduce the length of the encoding to $O(nm)$ is as follows: First observe that we chose $d=n/\lg(1/\eps)$. Thus we can spend up to $O(m \lg(1/\eps))$ bits encoding each $f(e_j)$'s. Thus we simply encode approximations $\hat{f}(e_j)$ by specifying indices into a covering $C_2$ of $(1+\eps)B_2^m$ by $\eps B_2^m$ as outlined above. 

For the $f(y_{S_\ell})$'s, we have to be more careful as we cannot afford $m \lg(1/\eps)$ bits for each. First, we define the $d \times m$ matrix $A$ having the $\hat{f}(e_j) = c_2(f(e_j))$ as rows (see Figure~\ref{fig:main-idea}). Note that this matrix can be reconstructed from the part of the encoding specifying the $\hat{f}(e_j)$s. Now observe that the $j$'th coordinate of $v_\ell=A f(y_{S_\ell})$ is equal to $\langle \hat{f}(e_j),  f(y_{S_\ell}) \rangle$. This is within $O(\eps)$ of $\langle e_j, y_{S_\ell} \rangle$. The coordinates of $v_\ell$ thus determine $S_\ell$ due to the gap in inner products depending on whether $j \in S_\ell$ or not. We therefore seek to encode the $v_\ell$ efficiently. Since the $v_\ell$ are in $\R^d$, this seems quite hopeless to do in $O(m)$ bits per $v_\ell$. The key observation is that they lie in an $m$-dimensional subspace of $\R^d$, namely in the column space of $A$. This observation will allow us to get down to just $O(m)$ bits. We are ready to give the remaining details.

\begin{figure}
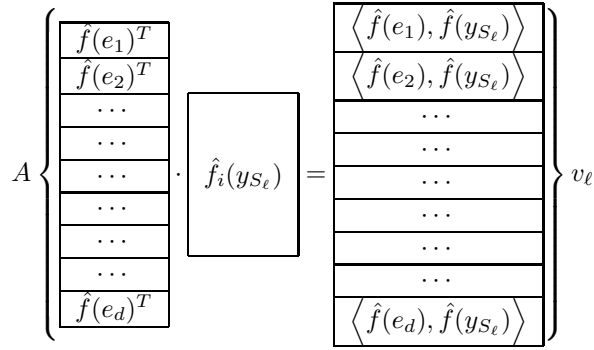

\begin{center}
$$
A\left\{\begin{tabular}{|c|}
\hline
$\hat{f}(e_1)^T$\\
\hline
$\hat{f}(e_2)^T$\\
\hline
$\cdots$\\
\hline
$\cdots$\\
\hline
$\cdots$\\
\hline
$\cdots$\\
\hline
$\cdots$\\
\hline
$\cdots$\\
\hline
$\hat{f}(e_d)^T$\\
\hline
\end{tabular}\cdot
\begin{tabular}{|c|}
\hline\\
\\
$\hat{f}_i(y_{S_\ell})$\\
\\
\\
\hline
\end{tabular}
= 
\begin{tabular}{|c|}
\hline
$\inprod{\hat{f}(e_1), \hat{f}(y_{S_\ell})}$\\
\hline
$\inprod{\hat{f}(e_2), \hat{f}(y_{S_\ell})}$\\
\hline
$\cdots$\\
\hline
$\cdots$\\
\hline
$\cdots$\\
\hline
$\cdots$\\
\hline
$\cdots$\\
\hline
$\cdots$\\
\hline
$\inprod{\hat{f}(e_d), \hat{f}(y_{S_\ell})}$\\
\hline
\end{tabular}\right\}v_\ell
$$
\caption{Notation to describe a more efficient encoding of $P\in\mathcal P$.}\label{fig:main-idea}
\end{center}
\end{figure}

Let
$W$ denote the subspace of $\R^d$ spanned by the columns of
$A$. We have $\dim(W) \leq m$. Define $T$ as the convex body 
$$
T:= B_\infty^d \cap W.
$$
That is, $T$ is the intersection of the subspace
$W$ with the $d$-dimensional $\ell_\infty$ unit ball
$B_\infty^d$. Now let $C_\infty$ be a minimum cardinality
covering of $(22 \eps)T$ by
translated copies of $\eps T$, computed by any deterministic procedure that
depends only on $T$. Since $T$ is origin symmetric, by Corollary~\ref{cor:coveringC} it follows that
$|C_\infty| \leq 2^{m \lg 45}$. To encode the vectors
$y_{S_1},\dots,y_{S_Q}$ we make use of the following lemma, whose proof we give in Section~\ref{sec:closeproof}:
\begin{lemma}
\label{lem:closeIPs}
For every $e_j$ and $y_{S_\ell}$ in $P$, we have 
$$|\langle \hat{f}(e_j), f(y_{S_\ell}) \rangle - \langle e_j ,y_{S_\ell} \rangle| \leq 6\eps.$$
\end{lemma}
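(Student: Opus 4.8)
The plan is to reduce everything to two facts that are already available: the inner‑product distortion bound $|\inprod{f(e_j), f(y_{S_\ell})} - \inprod{e_j, y_{S_\ell}}| \le 4\eps$ derived in the proof overview, and the closeness $\|\hat f(e_j) - f(e_j)\|_2 \le \eps$ guaranteed by the covering $C_2$ (since $f(e_j)$ lies in the translate $c_2(f(e_j)) + \eps B_2^m$ whose center is $\hat f(e_j)$, the distance from $f(e_j)$ to that center is at most the radius $\eps$). First I would split the quantity of interest via the triangle inequality as
\begin{align*}
|\inprod{\hat f(e_j), f(y_{S_\ell})} - \inprod{e_j, y_{S_\ell}}|
&\le |\inprod{\hat f(e_j) - f(e_j),\, f(y_{S_\ell})}| \\
&\quad{}+ |\inprod{f(e_j), f(y_{S_\ell})} - \inprod{e_j, y_{S_\ell}}| ,
\end{align*}
and bound the second summand by $4\eps$ exactly as in the overview (this uses only that $f$ satisfies \eqref{eqn:jl-guarantee} on $P$ and that we may assume $f(0)=0$).

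For the first summand I would apply Cauchy--Schwarz: it is at most $\|\hat f(e_j) - f(e_j)\|_2 \cdot \|f(y_{S_\ell})\|_2$. The first factor is at most $\eps$ by the covering property just mentioned, and the second is at most $\sqrt{1+\eps} < \sqrt 2$ because $\|y_{S_\ell}\|_2 = 1$ and $f$ preserves norms up to a $(1+\eps)$ factor (again by \eqref{eqn:jl-guarantee} with $f(0)=0$). Hence the first summand is at most $\sqrt 2\,\eps$, and adding the two contributions gives at most $4\eps + \sqrt 2\,\eps < 6\eps$, which is the claim.

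There is no substantive obstacle here — the lemma is a routine triangle‑inequality‑plus‑Cauchy--Schwarz estimate. The one point worth flagging is that at this stage of the construction only the $f(e_j)$'s have been rounded to net points, while the $f(y_{S_\ell})$'s are still the exact embedded vectors; this is why we may use the sharp bound $\|f(y_{S_\ell})\|_2 \le \sqrt{1+\eps}$ rather than a lossier net estimate, and the slack between $(4+\sqrt2)\eps$ and $6\eps$ is left deliberately to absorb constants later in the encoding argument.
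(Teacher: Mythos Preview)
Your proof is correct and follows essentially the same route as the paper's: split off $\inprod{\hat f(e_j)-f(e_j),f(y_{S_\ell})}$, bound it by Cauchy--Schwarz using $\|\hat f(e_j)-f(e_j)\|_2\le\eps$ and $\|f(y_{S_\ell})\|_2\le\sqrt{1+\eps}$, and then invoke the $4\eps$ inner-product distortion bound. The only cosmetic difference is that the paper uses the cruder estimate $\|f(y_{S_\ell})\|_2\le 2$ (giving $2\eps+4\eps=6\eps$ exactly) rather than your $\sqrt{2}$, but this is immaterial.
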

From Lemma~\ref{lem:closeIPs}, it
follows that $|\langle \hat{f}(e_j), f(y_{S_\ell}) \rangle| \leq 6\eps +
\langle e_j, y_{S_\ell} \rangle \leq 22\eps$ for every
$e_j$ and $y_{S_\ell}$ in $P$. Since the $j$'th coordinate of $Af(y_{S_\ell})$
equals $\langle \hat{f}(e_j), f(y_{S_\ell}) \rangle$, it follows that $Af(y_{S_\ell})
\in (22 \eps)T$. Using this fact, we encode each $y_{S_\ell}$ by finding some
vector $c_\infty(y_{S_\ell})$ such that
$c_\infty(y_{S_\ell})+\eps T$ is a convex shape in
the covering $C_\infty$ and $Af(y_{S_\ell}) \in c_\infty(y_{S_\ell})
+\eps T$. We write down $c_\infty(y_{S_\ell})$ as
an index into $C_\infty$. This costs a total of $Qm\lg 45 = O(Qm)$ bits over all
$y_{S_\ell}$. We now describe our decoding algorithm.

\paragraph{Decoding Algorithm}
To recover $P = \{0,e_1,\dots,e_d,y_{S_1},\dots,y_{S_Q}\}$ from the above encoding, we only have to recover
$y_{S_1},\dots,y_{S_Q}$ as $\{0,e_1,\dots,e_d\}$ is the same for all $P
\in \PointSets$. We first reconstruct the matrix $A$. We can do this
since $C_2$ was chosen independently of $P$ and thus by the indices
encoded into $C_2$, we recover $c_2(e_j) = \hat{f}(e_j)$ for
$j=1,\dots,d$. These are the rows of $A$. Then given $A$, we know $T$. Knowing $T$, we compute $C_\infty$ since it was
constructed via a deterministic procedure depending only on $T$. This
finally allows us to recover $c_\infty(y_{S_1}),\dots,c_\infty(y_{S_Q})$. What
remains is to recover $y_{S_1},\dots,y_{S_Q}$. Since $y_{S_\ell}$ is uniquely
determined from the set $S_\ell \subseteq \{1,\dots,d\}$ of
$k$ indices, we focus on recovering this set of indices
for each $y_{S_\ell}$.

For $\ell=1,\dots,Q$ recall that $Af(y_{S_\ell})$ is in $c_\infty(y_{S_\ell})+\eps T$. Observe now that:
\begin{eqnarray*}
Af(y_{S_\ell}) \in  c_\infty(y_{S_\ell}) + \eps T &\Rightarrow& \\
Af(y_{S_\ell}) - c_\infty(y_{S_\ell}) \in \eps T &\Rightarrow& \\
\|Af(y_{S_\ell})-c_\infty(y_{S_\ell})\|_\infty \leq \eps.
\end{eqnarray*}
But the $j$'th coordinate of $Af(y_{S_\ell})$ is $\langle \hat{f}(e_j), f(y_{S_\ell})
\rangle$. We combine the above with Lemma~\ref{lem:closeIPs} to deduce
$|(c_\infty(y_{S_\ell}))_j - \langle e_j , y_{S_\ell} \rangle| \leq 7\eps$ for all $j$. We thus have that $(c_\infty(y_{S_\ell}))_j \leq
7 \eps$ for $j \notin S_i$ and $(c_\infty(y_{S_\ell}))_j \geq 9 \eps$ for $j
\in S_\ell$. We finally conclude that the set $S_\ell$, and thus $y_{S_\ell}$, is
uniquely determined from $c_\infty(y_{S_\ell})$.

\paragraph{Analysis}
We finally analyse the size of the encoding produced by the above
procedure and derive a lower bound on $m$. Recall that the encoding
procedure produces a total of $d m \lg(1+4/\eps) + O(Q m)
 = O(nm)$
bits. But $|\PointSets|
\geq \left(\binom{d}{k}/2\right)^{Q} \geq (d/(2k))^{kQ} =
(d/(2k))^{k(n-d-1)} \geq (d/(2k))^{kn/2}$. We therefore must have
\begin{eqnarray*}
nm &=& \Omega(kn \lg(d/k))
\Rightarrow\\
m &=& \Omega(\eps^{-2} \lg( \eps^2 n /\lg(1/\eps))).
\end{eqnarray*}
Since we assume $\eps > \lg^{0.5001}n/\sqrt{d} \geq \lg^{0.5001}n/\sqrt{n}$, this can be
simplified to
$$
m = \Omega(\eps^{-2} \lg(\eps^2 n)).
$$
This shows that $m=\Omega(\eps^{-2} \lg(\eps^2 n))$ for $d = n/\lg(1/\eps)$ and $\eps \in (\lg^{0.5001}n/\sqrt{d}, 1)$. The following paragraph shows how to handle the remaining values of $d$.

\paragraph{Handling Other Values of $d$}
For $d > n/\lg(1/\eps)$, the proof is easy: Simply repeat the above construction using only the first $n/\lg(1/\eps)$ standard unit vectors in the point sets of $\PointSets$. This reproves the above lower bound, with the only further restriction that $\eps \in (\lg^{0.5001}n/\sqrt{\min\{d,n\}}, 1)$ as opposed to $\eps \in (\lg^{0.5001}n/\sqrt{d}, 1)$. 

For $d < n/\lg(1/\eps)$ and $\eps \in (\lg^{0.5001}n/\sqrt{d}, 1)$, assume for the sake of contradiction that it is possible to embed into $o(\eps^{-2} \lg(\eps^2 n))$ dimensions. Now take any point set $P$ in $\R^{d'}$ with $d' = n/\lg(1/\eps)$ and apply a JL transform into $d$ dimensions on it, obtaining a point set $P'$ in $d$ dimensions. This new point set has all distances preserved to within $(1 + O(\sqrt{\lg n/ d}))$ (by the standard JL upper bound). Next apply the hypothetical JL transform in $d$ dimensions to reduce the target dimension to $o(\eps^{-2} \lg(\eps^2 n))$. Distances are now preserved to within $(1 + O(\sqrt{\lg n/ d}))(1+\eps)$. Since we assumed $\eps > \lg^{0.5001} n/\sqrt{d}$, we have that $(1+O(\sqrt{\lg n/d})) = (1+o(\eps))$, which implies $(1 + O(\sqrt{\lg n/ d}))(1+\eps) =(1+O(\eps))$. This contradicts the lower bound for $d' = n/\lg(1/\eps)$ dimensions.

\subsection{Proof of Lemma~\ref{lem:closeIPs}}\label{sec:closeproof}
\label{sec:proofLemma}
In this section, we prove the lemma:
\begin{customlem}{\ref{lem:closeIPs}}
For every $e_j$ and $y_{S_\ell}$ in $P$, we have 
$$|\langle \hat{f}(e_j), f(y_{S_\ell}) \rangle - \langle e_j ,y_{S_\ell} \rangle| \leq 6\eps.$$
\end{customlem}
\begin{proof}
First note that:
\begin{eqnarray*}
\langle \hat{f}(e_j), f(y_{S_\ell}) \rangle &=&  \\ \langle c_2(e_j)-f(e_j) + f(e_j)
,f(y_{S_\ell}) \rangle 
&=& \\ \langle f(e_j), f(y_{S_\ell})
\rangle + \langle c_2(e_j)-f(e_j), f(y_{S_\ell}) \rangle
&\in& \\ \langle f(e_j), f(y_{S_\ell})
\rangle \pm \|c_2(e_j)-f(e_j)\|_2 \| f(y_{S_\ell})\|_2.
\end{eqnarray*}
Since $C_2$ was a covering with $\eps B_2^m$, we have
$\|c_2(e_j)-f(e_j)\|_2 \leq \eps$. Recall that $\|f(y_{S_\ell})\|^2_2 \leq
(1+\eps)$. This in particular implies that $\|f(y_{S_\ell})\|_2 \leq 2$. We thus have:
\begin{eqnarray}
\label{eq:range}
\langle \hat{f}(e_j), f(y_{S_\ell})) \rangle &\in&  \langle f(e_j), f(y_{S_\ell})
\rangle \pm 2\eps.
\end{eqnarray}
To bound $\langle f(e_j), f(y_{S_\ell}) \rangle$, observe that
\begin{eqnarray*}
\|f(e_j)-f(y_{S_\ell})\|_2^2 &=&  \\ \|f(e_j)\|_2^2 + \|f(y_{S_\ell})\|_2^2 - 2\langle
f(e_j), f(y_{S_\ell}) \rangle.
\end{eqnarray*}
This implies that
\begin{eqnarray*}
2\langle f(e_j),f(y_{S_\ell}) \rangle &\in& \\
 \|e_j\|_2^2(1\pm \eps) +
\|y_{S_\ell}\|_2^2(1\pm \eps) - \|e_i-y_{S_\ell}\|_2^2(1 \pm \eps) 
&\subseteq& \\ 2\langle e_j ,y_{S_\ell} \rangle \pm \eps(\|e_j\|_2^2 + \|y_{S_\ell} \|_2^2 +
\|e_j-y_{S_\ell}\|_2^2) 
&\subseteq& \\ 2\langle e_j ,y_{S_\ell} \rangle \pm \eps(4(\|e_j\|_2^2 + \|y_{S_\ell} \|_2^2)) 
\end{eqnarray*}
That is, we have
\begin{eqnarray*}
\langle f(e_j),f(y_{S_\ell})\rangle &\in& \langle e_j ,y_{S_\ell} \rangle \pm 2 \eps 
(\|e_j\|_2^2 + \|y_{S_\ell}\|_2^2) 
\end{eqnarray*}
Both the $e_j$'s and $y_{S_\ell}$'s have unit norm, hence
\begin{eqnarray*}
\langle f(e_j),f(y_{S_\ell})\rangle &\in& \langle e_j ,y_{S_\ell} \rangle \pm 4\eps
\end{eqnarray*}
Inserting this in~\eqref{eq:range}, we obtain
\begin{eqnarray*}
\langle \hat{f}(e_j), f(y_{S_\ell}) \rangle &\in& \langle e_j ,y_{S_\ell} \rangle \pm 6\eps.
\end{eqnarray*}
\end{proof}

\section*{Acknowledgments}
We thank Oded Regev for pointing out a simplification to our initial argument for handling the case $d < n/\lg(1/\eps)$, and for his permission to include the simpler argument here.

K.G.L. is supported by Center for Massive Data Algorithmics, a Center of the Danish National Research Foundation, grant DNRF84, a Villum Young Investigator Grant and an AUFF Starting Grant. J.N. did this work while supported by NSF grant IIS-1447471 and CAREER award CCF-1350670, ONR Young Investigator award N00014-15-1-2388, and a Google Faculty Research Award.

\bibliographystyle{alpha}

\newcommand{\etalchar}[1]{$^{#1}$}

\end{document}